\newtheorem{remark}{Remark}
\theoremstyle{definition}
\newtheorem{definition}{Definition}
\newtheorem{theorem}{Theorem}
\newtheorem{prop}{Proposition}
\newtheorem{corollary}{Corollary}
\newcommand{\real}{\mathcal{R}}
\newcommand{\tr}[1]{\textbf{tr}\left(#1\right)}
\newcommand{\Exp}[1]{\mathbb{E}\left[#1\right]}
\newcommand{\Var}[1]{\Exp{#1{#1}^T}}
\newcommand{\eqn}[1]{(\ref{eqn:#1})}
\newcommand{\eqnlabel}[1]{\label{eqn:#1}}
\title{\LARGE \bf
Optimal State Estimation in the Presence of Non-Gaussian Uncertainty via Wasserstein Distance Minimization
}
\author{Himanshu Prabhat and Raktim Bhattacharya
\thanks{Himanshu Prabhat and Raktim Bhattacharya are with the Department of Aerospace Engineering, Texas A\&M University, TX, USA. Their email addresses are {\tt\small himanshu.pr007@tamu.edu} and {\tt\small raktim@tamu.edu} respectively.}%
}
\begin{document}

\maketitle
\thispagestyle{empty}
\pagestyle{empty}

\begin{abstract}
This paper presents a novel distribution-agnostic Wasserstein distance-based estimation framework. The goal is to determine an optimal map combining prior estimate with measurement likelihood such that posterior estimation error optimally reaches the Dirac delta distribution with minimal effort. The Wasserstein metric is used to quantify the effort of transporting from one distribution to another. We hypothesize that minimizing the Wasserstein distance between the posterior error and the Dirac delta distribution results in optimal information fusion and posterior state uncertainty. Framework validation is demonstrated by the successful recovery of the classical Kalman filter for linear systems with Gaussian uncertainties. Notably, the proposed Wasserstein filter does not rely on particle representation of uncertainty. Furthermore, the classical result for the Gaussian Sum Filter (GSF) is retrieved from the Wasserstein framework. This approach analytically exhibits the suboptimality of GSF and enables the use of nonlinear optimization techniques to enhance the accuracy of the Gaussian sum estimator.
\end{abstract}

\section{INTRODUCTION}
The necessity for nonlinear estimation algorithms arises from various challenges inherent to complex systems and their associated dynamics. Nonlinear dynamics often evolve in nonlinear manifolds, necessitating the development of estimation techniques that can capture these complexities. Sensor models are frequently nonlinear and exhibit time-varying, state-dependent noise, complicating the estimation process further. While linearized dynamics with frequent measurements have proven effective in practice, they tend to fail when measurements are sparse spatio-temporally. In such cases, nonlinearities in the dynamics dominate system trajectories, resulting in non-Gaussian state uncertainties that cannot be resolved using linear approximations. Particle-based methods have been proposed to address these issues; however, they are often plagued with concerns related to unbiasedness, lack of convergence guarantees, and computational inefficiencies.

As a result, nonlinear estimation remains an active area of research, with numerous novel approaches emerging in the literature. Recently, the use of optimal transport has been gaining popularity in the field of nonlinear estimation. Developments in this area show promise, offering potential solutions to the challenges posed by nonlinear dynamics and complex sensor models. By incorporating these techniques, researchers are advancing the state of the art in nonlinear estimation, enhancing its accuracy and reliability in a wide range of applications.

The first challenge in nonlinear filtering is the propagation of uncertainty to get the prior state uncertainty, which is governed by advection-diffusion equations. Wasserstein gradient flows provide a powerful means of understanding and solving many diffusion equations. Specifically, Fokker-Planck equations, which model the diffusion of probability measures, can be understood as gradient descent over entropy functionals in Wasserstein space. This simplifies the computation of the prior uncertainty in nonlinear filtering \cite{mokrov2021large, frogner2020approximate, shafieezadeh2018wasserstein}. 

The second challenge is determining the optimal posterior state uncertainty with available measurements. This step is difficult as the prior and the measurement uncertainties can potentially be in different manifolds, represented with a mix of Cartesian and circular variables. Information fusion in the optimal transport framework is primarily performed using particle representation of distributions. Danu et al. \cite{danu2009wasserstein} treat the problem of evaluating the cost of particle filter clouds association based on the Wasserstein distance of different orders, analyzing the implications of clouds cardinality (for weighted particles), and of various resampling methods (for unweighted particles). More recently, Reich et al. \cite{reich2013ensemble, reich2013nonparametric} leveraged the optimal transport property of Wasserstein distance to resample weighted point clouds from Bayesian updates and generate unweighted particles. The formulation works quite well in nonlinear manifolds, as shown by Das et al. \cite{das2019optimal, das2019optimal2}. Taghvaei et al. \cite{taghvaei2020optimal} apply an optimal transport framework to develop algorithms that feedback measurements in the evolution of the state probability density function, resulting in an elegant solution to the ensemble Kalman filtering problem. The resulting system is a controlled interaction between a finite number of particles that minimizes the Wasserstein distance between distributions over consecutive time steps. The work of Reich et al. and Taghvaei et al. overcame the problem ensemble Kalman filtering has with biases with finite particles and particle degeneracy. These early breakthroughs have led to considerable research in this field \cite{ singh2022inference, chen2021optimal, taghvaei2022optimal, corenflos2021differentiable, chen2021optimal1}.

\subsection*{Main Contributions}
In this paper, we introduce an alternative formulation for data assimilation, which seeks to determine an optimal map that combines both the prior distribution and the likelihood. The objective is to minimize the Wasserstein distance between the posterior error distribution and the Dirac distribution centered at the origin. Utilizing the Dirac distribution for error enables the representation of a perfect state estimate characterized by zero error and dispersion. We hypothesize that minimizing the Wasserstein distance between the posterior error and the Dirac distribution results in optimal information fusion and posterior state uncertainty. The Wasserstein distance, which defines a metric on the manifold of probability density functions (PDFs), is well-suited for formulating general estimation algorithms applicable to arbitrary uncertainty representations. It is important to note that the work presented in this paper does not rely on particle representations of uncertainty. Instead, it focuses on Gaussian and the mixture of Gaussian representations, which offers a more tractable and computationally efficient approach to the problem. 

Specifically, the following are the key contributions of the paper:\\[1mm]
i) In this paper, we introduce a novel general state-estimation formulation that aims to minimize the Wasserstein distance between the posterior error distribution and the Dirac distribution centered at the origin. The framework is distribution agnostic and can be applied to information fusion over general manifolds with arbitrary state uncertainty. \\[1mm]
ii) To validate the formulation, we demonstrate that the new framework successfully recovers the classical Kalman filter for linear systems characterized by Gaussian uncertainty.\\[1mm]
iii) The proposed Wasserstein filter recovers the classical result for the Gaussian Sum Filter. Notably, an analytical proof for the suboptimality condition of Gaussian Sum Estimators in minimum mean square error (MMSE) sense is presented using the new formulation.\\[1mm]
iv) Lastly, we showcase that by employing a nonlinear optimization technique, which is warm-started with the suboptimal solution, the accuracy of Gaussian Sum Estimators can be enhanced.

Key results are presented in Theorems \ref{kalman} and \ref{GMM}, along with supporting corollaries.

\section{MATHEMATICAL PRELIMINARIES}

\begin{definition} (\textbf{Wasserstein distance})
Given $z_{1}, z_{2} \in \mathbb{R}^{n}$, with $\mathcal{P}_{2}(p_1,p_2)$ containing the collection of all probability measures $p$ supported on the product space $\mathbb{R}^{2n}$, having a finite second moment, with first marginal $p_{1}$ and second marginal $p_{2}$. 2-Wasserstein distance, denoted as $W_2$, between two probability measures $p_{1},p_{2}$, is defined as
\label{Wassdefn}
\begin{align}
&W_2(p_{1},p_{2}) \triangleq \nonumber \\
&\left(\displaystyle\inf_{p\in\mathcal{P}_{2}(p_{1},p_{2})}\displaystyle\int_{\mathbb{R}^{2n}} \parallel z_{1}-z_{2}\parallel_{\ell_{2}\left(\mathbb{R}^{n}\right)}^{2} \: dp(z_{1},z_{2}) \right)
^{\frac{{1}}{2}}. \eqnlabel{W-dist}
\end{align}
\end{definition}
\begin{remark}
The Wasserstein distance can be interpreted as the cost for the Monge-Kantorovich optimal transportation plan\cite{villani2003topics}. This adds an intuitive meaning to the Wasserstein distance in terms of the least effort needed to morph from one distribution to another. The particular choice of $\ell_{2}$ norm with order 2 is motivated by \cite{halder2012further}. Further, one can prove (p. 208, \cite{villani2003topics}) that $W_2$ defines a metric on the manifold of PDFs, which makes it suitable for formulating distribution agnostic estimation algorithms.
\label{WassRemarkFirst}
\end{remark}
 
\begin{prop}
The Wasserstein distance $W_2$ between two multivariate Gaussians $\mathcal{N}(\mu_1, \Sigma_1)$ and $\mathcal{N}(\mu_2, \Sigma_2)$ in $\real^n$ is given by
\begin{align}   
&W_2^2\left(\mathcal{N}(\mu_1, \Sigma_1),\mathcal{N}(\mu_2, \Sigma_2)\right)= \nonumber \\
&\|\mu_1 - \mu_2\|^2+\tr{\Sigma_1+\Sigma_2-2\left(\sqrt{\Sigma_1}\Sigma_2\sqrt{\Sigma_1}\right)^{\frac{1}{2}}
}. \eqnlabel{wassdef}
\end{align}
\end{prop}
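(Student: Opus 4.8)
The plan is to prove \eqn{wassdef} by reducing the infinite-dimensional transport problem of Definition~\ref{Wassdefn} to a finite-dimensional optimization over the cross-covariance between the two Gaussians, and then solving that optimization in closed form. First I would expand the quadratic transport cost. For \emph{any} coupling $p$ with the prescribed Gaussian marginals, writing $\|z_1 - z_2\|^2 = \|z_1\|^2 + \|z_2\|^2 - 2 z_1^T z_2$ and integrating against $p$ yields
\begin{align}
\Exp{\|z_1 - z_2\|^2} = \|\mu_1 - \mu_2\|^2 + \tr{\Sigma_1 + \Sigma_2} - 2\,\tr{C}, \nonumber
\end{align}
where $C \triangleq \Exp{(z_1 - \mu_1)(z_2 - \mu_2)^T}$ is the cross-covariance block of the joint distribution. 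The mean terms collapse exactly to the displacement $\|\mu_1 - \mu_2\|^2$, and $\tr{\Sigma_1 + \Sigma_2}$ is fixed by the marginals, so minimizing the cost over $p$ is equivalent to \emph{maximizing} $\tr{C}$. This step is the crux of the reduction: it isolates the only free quantity and uses nothing about $p$ beyond its first two moments, so the argument applies to arbitrary couplings rather than presupposing a Gaussian optimizer.

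Next I would identify the feasible set for $C$. Since $p$ is a bona fide joint measure, its covariance $\left[\begin{smallmatrix}\Sigma_1 & C \\ C^T & \Sigma_2\end{smallmatrix}\right]$ must be positive semidefinite; assuming $\Sigma_1 \succ 0$, the Schur complement condition makes this equivalent to $\Sigma_2 - C^T \Sigma_1^{-1} C \succeq 0$. I would then substitute $C = \sqrt{\Sigma_1}\, K \sqrt{\Sigma_2}$, which converts the constraint into the contraction bound $K^T K \preceq I$ and rewrites the objective as $\tr{C} = \tr{\sqrt{\Sigma_2}\sqrt{\Sigma_1}\,K}$. The remaining task is to maximize a linear functional of $K$ over the operator-norm ball.

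The main obstacle is this last maximization, which I would settle with von Neumann's trace inequality (equivalently, an SVD argument): for $M \triangleq \sqrt{\Sigma_2}\sqrt{\Sigma_1}$, the quantity $\max_{\|K\|\le 1}\tr{MK}$ equals the nuclear norm of $M$, namely $\tr{(M^T M)^{1/2}} = \tr{\left(\sqrt{\Sigma_1}\Sigma_2\sqrt{\Sigma_1}\right)^{1/2}}$, with the optimal $K$ aligning the singular vectors of $M$. Substituting back gives the claimed value of $W_2^2$. To close the argument I would confirm the bound is attained by exhibiting the explicit linear (hence Gaussian) coupling $z_2 = \mu_2 + T(z_1 - \mu_1)$ with $T = \Sigma_1^{-1/2}\left(\sqrt{\Sigma_1}\Sigma_2\sqrt{\Sigma_1}\right)^{1/2}\Sigma_1^{-1/2}$; a direct check shows that $T\Sigma_1 T^T = \Sigma_2$, so this map pushes $\mathcal{N}(\mu_1,\Sigma_1)$ onto $\mathcal{N}(\mu_2,\Sigma_2)$ and realizes the maximal $\tr{C}$. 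The lower bound from the moment expansion and this upper bound then coincide. The degenerate case of singular $\Sigma_1$ would be handled by a standard perturbation argument applied to $\Sigma_1 + \epsilon I$ followed by $\epsilon \to 0$.
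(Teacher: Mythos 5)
Your proof is correct, but note that the paper itself offers no proof of this proposition at all: it is stated as a known closed-form result (classical, due to Dowson--Landau and Olkin--Pukelsheim, often cited as Givens--Shortt), with the surrounding Wasserstein background deferred to \cite{villani2003topics, halder2012further}. What you have written is a complete and accurate rendering of that standard argument, and every step checks out: the moment expansion $\Exp{\|z_1-z_2\|^2}=\|\mu_1-\mu_2\|^2+\tr{\Sigma_1+\Sigma_2}-2\tr{C}$ is valid for an arbitrary coupling with finite second moments, so the reduction to maximizing $\tr{C}$ over cross-covariances with $\left[\begin{smallmatrix}\Sigma_1 & C\\ C^T & \Sigma_2\end{smallmatrix}\right]\succeq 0$ does not presuppose a Gaussian optimizer; the substitution $C=\sqrt{\Sigma_1}K\sqrt{\Sigma_2}$ with the von Neumann/nuclear-norm evaluation gives $\max\tr{C}=\tr{\left(\sqrt{\Sigma_1}\Sigma_2\sqrt{\Sigma_1}\right)^{1/2}}$; and your verification that $T=\Sigma_1^{-1/2}\left(\sqrt{\Sigma_1}\Sigma_2\sqrt{\Sigma_1}\right)^{1/2}\Sigma_1^{-1/2}$ satisfies $T\Sigma_1T^T=\Sigma_2$ and attains the bound closes the duality gap, confirming the infimum in Definition~\ref{Wassdefn} is achieved. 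One small point of hygiene: the equivalence of the Schur-complement constraint with $K^TK\preceq I$ uses $\Sigma_2\succ 0$ as well as $\Sigma_1\succ 0$ (for singular $\Sigma_2$ the parametrization needs a range restriction, since feasibility forces $\ker\Sigma_2\subseteq\ker C$), but your closing $\epsilon$-perturbation argument, applied to both marginals, covers this without further work. Relative to the paper, your write-up buys self-containedness at modest length; it would serve as a legitimate inline proof where the authors rely on citation.
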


\begin{corollary} Square of 2-Wasserstein distance between Gaussian distribution $\mathcal{N}(\mu, \Sigma)$  and the Dirac delta function $\delta(x-\mu_c)$ is given by,
\begin{align}\eqnlabel{W_dirac}
W_2^2(\mathcal{N}(\mu, \Sigma),\delta(x-\mu_c))=\|\mu-\mu_c\|^2+\tr{\Sigma}.
\end{align}
\end{corollary}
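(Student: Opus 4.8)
The plan is to obtain \eqn{W_dirac} as a degenerate special case of the Gaussian formula \eqn{wassdef}. The Dirac delta $\delta(x-\mu_c)$ is the weak limit of Gaussians $\mathcal{N}(\mu_c,\Sigma_2)$ as $\Sigma_2\to 0$, so I would set $\mu_1=\mu$, $\Sigma_1=\Sigma$, $\mu_2=\mu_c$, and $\Sigma_2=0$ in the Proposition. The mean term becomes $\|\mu-\mu_c\|^2$ immediately, and the covariance term collapses once the cross term is shown to vanish. Concretely, with $\Sigma_2=0$ the matrix inside the square root is $\sqrt{\Sigma_1}\,\Sigma_2\,\sqrt{\Sigma_1}=\sqrt{\Sigma}\cdot 0\cdot\sqrt{\Sigma}=0$, whose principal square root is again the zero matrix; hence $\tr{\Sigma_1+\Sigma_2-2\left(\sqrt{\Sigma_1}\Sigma_2\sqrt{\Sigma_1}\right)^{1/2}}=\tr{\Sigma}$, and \eqn{W_dirac} follows.

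To make this rigorous rather than merely formal, I would prefer to argue directly from Definition~\ref{Wassdefn}, which also sidesteps any limiting argument. Because the second marginal $p_2=\delta(x-\mu_c)$ is concentrated at a single point, every coupling $p\in\mathcal{P}_2(p_1,p_2)$ must place all its mass on the set $z_2=\mu_c$. The admissible coupling is therefore unique, the infimum is attained trivially, and the transport cost reduces to $W_2^2=\Exp{\|z_1-\mu_c\|^2}$ with $z_1\sim\mathcal{N}(\mu,\Sigma)$. Expanding $z_1-\mu_c=(z_1-\mu)+(\mu-\mu_c)$ and using $\Exp{z_1-\mu}=0$ eliminates the cross term, leaving $\Exp{\|z_1-\mu\|^2}+\|\mu-\mu_c\|^2=\tr{\Sigma}+\|\mu-\mu_c\|^2$, where the identity $\Exp{\|z_1-\mu\|^2}=\tr{\Sigma}$ is the standard trace expression for the variance of a Gaussian.

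The main obstacle is purely bookkeeping rather than analysis. In the first route I must confirm that the principal square root of the zero matrix is the zero matrix, so that the substitution $\Sigma_2=0$ into \eqn{wassdef} is legitimate; in the direct route I must justify that the forced second marginal makes the product coupling the unique element of $\mathcal{P}_2(p_1,p_2)$. Both facts are elementary, so I expect the corollary to follow with essentially no analytic difficulty, and I would present the direct computation as the cleaner and self-contained derivation.
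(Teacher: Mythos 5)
Your proposal is correct, and it actually contains two proofs: the first route is essentially the paper's own argument, while the second is a genuinely different and cleaner one. The paper proves the corollary exactly as in your first paragraph, by writing $\delta(x-\mu_c)=\lim_{\mu\to\mu_c,\,\Sigma\to 0}\mathcal{N}(\mu,\Sigma)$ and substituting the degenerate Gaussian into \eqn{wassdef}; your observation that $\left(\sqrt{\Sigma}\cdot 0\cdot\sqrt{\Sigma}\right)^{1/2}=0$ is the only bookkeeping that substitution needs. Your second route argues directly from Definition~\ref{Wassdefn}: since the second marginal is a point mass, every coupling in $\mathcal{P}_2(p_1,p_2)$ concentrates on $\{z_2=\mu_c\}$, so the coupling is unique (the pushforward of $p_1$ under $z_1\mapsto(z_1,\mu_c)$), the infimum is trivial, and $W_2^2=\Exp{\|z_1-\mu_c\|^2}=\|\mu-\mu_c\|^2+\tr{\Sigma}$ by the bias--variance split. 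This direct computation buys two things the paper's one-liner does not. First, rigor: substituting a limit into \eqn{wassdef} tacitly assumes continuity of $W_2$ along the degenerate family (weak convergence together with convergence of second moments), a point the paper glosses over, whereas your argument never leaves the definition. Second, generality: Gaussianity enters only through $\Exp{\|z_1-\mu\|^2}=\tr{\Sigma}$, which holds for \emph{any} distribution with mean $\mu$ and covariance $\Sigma$ and finite second moment, so your derivation shows $W_2^2(p,\delta(x-\mu_c))=\|\mu-\mu_c\|^2+\tr{\Sigma}$ distribution-free --- a fact that aligns nicely with the paper's distribution-agnostic framing and would strengthen the later use of \eqn{wass_distG}. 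Your preference for presenting the direct computation is well placed.
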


\begin{proof}
Defining the Dirac delta function as (see e.g., p. 160-161, \cite{hassani1999mathematical})
\begin{align*}
   \delta(x-\mu_c)=\lim_{\mu\to \mu_c,\Sigma\to 0} \mathcal{N}(\mu,\Sigma),
\end{align*}
and substituting in \eqn{wassdef}, we get the result.
\end{proof}

The Wasserstein distance in \eqn{W_dirac} can also be written as,
\begin{align}
&W_2^2(\mathcal{N}(\mu, \Sigma),\delta(x-\mu_c)) =\|\mu-\mu_c\|^2+\tr{\Sigma},\nonumber \\
&={\tr{(\mu-\mu_c)(\mu-\mu_c)^T+\Sigma}} \eqnlabel{wass_distG}.
\end{align}



\section{WASSERSTEIN FILTER}
Here we consider the problem of updating the prior state estimate with available measurement data to arrive at the posterior state estimate. We assume $x\in\real^n$ is the true state. The prior estimate of $x$ is denoted by $x^-$, which is a random variable with associate probability density function $p_{x^-}(x^-)$. Similarly, the posterior estimate is denoted by $x^+$, also a random variable with associated probability density function  $p_{x^+}(x^+)$. We next assume that measurement $y\in\real^m$ is a function of the true state $x$, but corrupted by an additive noise $n$. It is modeled as 
\begin{align}
y := g(x) + n. \eqnlabel{meas}
\end{align} 
The noise is assumed to be a random variable with associated probability density function $p_n(n)$.

The errors associated with the prior and posterior estimates are defined as 
\begin{align}
e^- &:= x^- - x,\\
e^+ &:= x^+ - x,
\eqnlabel{errordef}
\end{align}
which are also random variables with probability density functions $p_{e^-}(e^-)$ and $p_{e^+}(e^+)$ respectively.

The objective of optimal estimation is to determine $x^+$ from $x^-$ and $y$ by minimizing some statistical quantity associated with $e^+$. More generally, we can define the optimal estimation problem as
\begin{align}
\min_{T} d(e^+) \eqnlabel{aopt}
\end{align}
where $T: (x^-,y) \mapsto x^+$, $d(e^+)$ is some cost function to be minimized. 

In optimal Wasserstein filtering, we define $d(e^+):= W^2_2(p_{e^+}(e^+),\delta(e^+))$, i.e. minimize the Wasserstein distance of the posterior-error's probability density function from the Dirac at the origin $\delta(e^+)$. The distribution $\delta(e^+)$ is the degenerate probability density function that represents the posterior error with zero mean and variance. The optimization therefore determines the map $T(\cdot,\cdot)$ that minimizes the Wasserstein distance between $p_{e^+}(e^+)$ and $\delta(e^+)$, resulting in the best posterior estimate of $x$ in this sense.

In the next subsections, we present a few specific cases of \eqn{aopt}, which are of engineering significance.
\subsection{Linear Measurement with Gaussian Uncertainty}
Given a linear sensor model $y:= Cx + n,n \sim \mathcal{N}(0,R)$ with Gaussian prior estimate $x^-$, we define a posterior linear map as $x^+ = T(x^-,y):= Gx^- + Hy$. It can be shown that $e^+$ is Gaussian distribution $\mathcal{N}(\mu_e^+, \Sigma_e^+)$, with
\begin{align}
\notag e^+ &=  x^+ - x, \\
\notag &= Gx^- + (HC-I)(x^- - e^-) + Hn,\\
&= \begin{bmatrix}(G+HC-I) & -(HC-I) & H \end{bmatrix}\begin{pmatrix}x^- \\ e^- \\ n \end{pmatrix}.
\end{align}
If $\Exp{n} = 0$ and $\Exp{e^-}=0$, then 
\begin{subequations}
\begin{align}
\mu_e^+ & := \Exp{e^+} = (G+HC-I)\mu^-, \eqnlabel{mue}\\
\Sigma_e^+ &:= \Var{e^+} - \mu_e^+{\mu_e^+}^T  \nonumber\\
&= \begin{bmatrix}(G+HC-I) & -(HC-I) & H \end{bmatrix}.\nonumber \\
&\Exp{\begin{pmatrix}x^- \\ e^- \\ n \end{pmatrix}\begin{pmatrix}x^- \\ e^- \\ n \end{pmatrix}^T}\begin{bmatrix}(G+HC-I)^T \\ -(HC-I)^T \\ H^T \end{bmatrix}& \nonumber \\
&- \mu_e^+{\mu_e^+}^T. 
\eqnlabel{sigmae}
\end{align}
\end{subequations}
In the case when, the joint probability density function associated with $e^+$ is Gaussian, denoted by $\mathcal{N}(\mu_e^+, \Sigma_e^+)$, the Wasserstein distance between $\mathcal{N}(\mu_e^+, \Sigma_e^+)$ and the Dirac delta function $\delta(e^+)$ is given by,
\begin{align}
W_2^2(\mathcal{N}(\mu_e^+, \Sigma_e^+),\delta(e^+))&=\tr{\mu_e^+{\mu_e^+}^T+\Sigma_e^+}.
\end{align}
Therefore, minimizing $W_2^2(\mathcal{N}(\mu_e^+, \Sigma_e^+),\delta(e^+))$ is the same as minimizing the posterior error variance, as it is done in Kalman filtering. The following theorem validates our argument that the proposed Wasserstein filter for a linear map between Gaussian distributed posterior and prior is the same as the Kalman filter. 
\begin{prop}(\textbf{Orthogonality Condition})
Consider a linear estimator with generic linear mapping between posterior and prior at the $k^\text{th}$ time step as 
\begin{align}
{x_k}^+ := G{x_k}^- + Hy_k.
\eqnlabel{linmap}
\end{align}
Let the posterior error be defined as ${e_k}^+ := {x_k}^+ -x_k$. The necessary conditions for a linear estimator to be optimal in the MMSE sense is given by
\begin{align}
    \Exp{{e_k^+}{x_k^-}^T} &= 0 \eqnlabel{optc1} \\
    \Exp{{e_k^+}{y_k}^T} &= 0. \eqnlabel{optc2}
\end{align}
This implies that at any time step $k$, the posterior error must be orthogonal to the prior state and current measurement.
\label{Orthog}
\end{prop}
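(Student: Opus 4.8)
The plan is to exploit the identity, already set up in the excerpt, that the Wasserstein objective coincides with the mean-square error. Writing $J := W_2^2(\mathcal{N}(\mu_e^+,\Sigma_e^+),\delta(e^+)) = \tr{\mu_e^+{\mu_e^+}^T + \Sigma_e^+}$ and recalling that $\mu_e^+{\mu_e^+}^T + \Sigma_e^+ = \Exp{e_k^+ {e_k^+}^T}$, I would first observe that $J = \tr{\Exp{e_k^+{e_k^+}^T}} = \Exp{\|e_k^+\|^2}$. Thus minimizing the Wasserstein distance of the posterior error to the Dirac at the origin is literally the MMSE problem, and the only free variables are the gain matrices $G$ and $H$ that define the linear map in \eqn{linmap}.

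Next I would substitute $e_k^+ = {x_k}^+ - x_k = G{x_k}^- + Hy_k - x_k$ into $J$ and derive the first-order stationarity conditions by differentiating with respect to $G$ and $H$. Since the dependence of $e_k^+$ on $G$ enters only through the term $G{x_k}^-$ and on $H$ only through $Hy_k$, I would apply the vector identity $\partial(\|Ga+b\|^2)/\partial G = 2(Ga+b)\,a^T$ with $a={x_k}^-$ (resp. $a=y_k$) and $b$ the remaining terms, then interchange differentiation and expectation to obtain
\begin{align*}
\frac{\partial J}{\partial G} = 2\,\Exp{{e_k}^+ {{x_k}^-}^T}, \qquad
\frac{\partial J}{\partial H} = 2\,\Exp{{e_k}^+ {y_k}^T}.
\end{align*}
Setting both gradients to zero yields exactly \eqn{optc1} and \eqn{optc2}, giving the orthogonality of the posterior error to the prior state and the measurement.

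The main subtlety is justifying the interchange of differentiation and expectation and handling the matrix derivative unambiguously, since $e_k^+$ is itself linear in the entries of $G$ and $H$. I would make this rigorous by expanding $\Exp{\|e_k^+\|^2}$ as a quadratic form in the entries of $G$ and $H$ and differentiating entrywise, which sidesteps any convention issues in the trace-derivative formula and justifies pulling the derivative inside the expectation (the integrand is polynomial in the parameters with integrable coefficients by the finite-second-moment assumption). Finally, because $J$ is a convex quadratic in $(G,H)$, these stationary points are in fact global minimizers; the proposition asserts only necessity, so the first-order argument suffices, but I would remark on the convexity to stress that the two orthogonality conditions completely characterize the optimal linear estimator.
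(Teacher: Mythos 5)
Your proposal is correct and follows essentially the same route as the paper: both write the cost $J=\tr{\Exp{e_k^+{e_k^+}^T}}$, substitute the linear map $e_k^+ = Gx_k^- + Hy_k - x_k$, and obtain \eqn{optc1}--\eqn{optc2} as the first-order stationarity conditions $\partial J/\partial G = 0$, $\partial J/\partial H = 0$. Your added care about interchanging differentiation and expectation and the convexity remark only tighten the paper's (somewhat terser) argument; they do not change the approach.
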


\begin{proof}
Given the linear estimator \eqn{linmap}, the cost function for MMSE optimization can be written as
\begin{align}
    \notag J &= \textbf{tr}\left\{\Exp{{e_k^+}{e_k^+}^T}\right\} \\
    \notag   &= \textbf{tr}\left\{\Exp{{e_k^+}{x_k^-}^T}G^T + \right. \\
    & \left.{} \Exp{{e_k^+}{y_k}^T}H^T - \Exp{{e_k^+}{x_k}^T}\right\}.\eqnlabel{Jlin}
\end{align}
Applying necessary first-order optimal conditions,$\frac{\partial J}{\partial G}=0; \frac{\partial J}{\partial H}=0$, \eqn{optc1} and \eqn{optc2} are obtained.
\end{proof}

\begin{corollary} 
Given an optimal linear estimator in the MMSE sense, consider a linear time-invariant state propagation model $x_{k+1} = A{x}_k + w_k$, with propagation noise $w_k \sim \mathcal{N}(0,Q)$. It can be shown that prior state and error estimate at any time step $k$, $x_k^-$ and $e_k^-:=x_k^--x$ respectively, are orthogonal.
\label{cor1}
\end{corollary}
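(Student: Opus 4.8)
The plan is to relate the prior quantities at step $k$ to the posterior quantities at step $k-1$ through the propagation model, and then to invoke the optimality conditions of Proposition \ref{Orthog}. First I would note that the prior estimate is the propagated posterior, $x_k^- = A x_{k-1}^+$, while the true state satisfies $x_k = A x_{k-1} + w_{k-1}$. Subtracting gives the recursion for the prior error, $e_k^- = A e_{k-1}^+ - w_{k-1}$, which expresses $e_k^-$ entirely in terms of the previous posterior error and the process noise.

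Substituting these into the cross-correlation to be shown yields
\begin{align}
\Exp{e_k^- {x_k^-}^T} = A\,\Exp{e_{k-1}^+{x_{k-1}^+}^T}A^T - \Exp{w_{k-1}{x_{k-1}^+}^T}A^T, \notag
\end{align}
so the claim reduces to showing that both terms vanish. For the first term, I would use the linear map \eqn{linmap} at step $k-1$, writing $x_{k-1}^+ = G x_{k-1}^- + H y_{k-1}$, so that $\Exp{e_{k-1}^+{x_{k-1}^+}^T} = \Exp{e_{k-1}^+{x_{k-1}^-}^T}G^T + \Exp{e_{k-1}^+ y_{k-1}^T}H^T$; both expectations are zero by the optimality conditions \eqn{optc1} and \eqn{optc2} applied at step $k-1$.

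For the second term I would argue by causality: the process noise $w_{k-1}$ that drives the transition into step $k$ is independent of all information used to form the posterior at step $k-1$ --- namely $x_{k-1}^-$ and $y_{k-1}$, hence their linear combination $x_{k-1}^+$ --- and has zero mean, so $\Exp{w_{k-1}{x_{k-1}^+}^T} = \Exp{w_{k-1}}\Exp{x_{k-1}^+}^T = 0$. Collecting the two vanishing terms then gives $\Exp{e_k^-{x_k^-}^T} = 0$, the asserted orthogonality of the prior state and prior error.

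The step I expect to be the main obstacle is this causality argument rather than any computation: it requires making the dependency structure precise, i.e.\ verifying that $x_{k-1}^+$ is assembled only from quantities determined before $w_{k-1}$ enters the dynamics, so that the whiteness and zero-mean of $w_{k-1}$ can be exploited. The remainder is a direct application of Proposition \ref{Orthog} together with the linearity of expectation.
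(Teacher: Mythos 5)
Your proposal is correct and follows essentially the same route as the paper: the same error recursion $e_k^- = A e_{k-1}^+ - w_{k-1}$, the same two-term expansion of $\Exp{e_k^-{x_k^-}^T}$, the zero-mean whiteness of $w_{k-1}$ to kill the noise term, and Proposition~\ref{Orthog} to kill the posterior cross-term. If anything, you are slightly more explicit than the paper, which merely states that the remaining term ``drops off'' by Proposition~\ref{Orthog}, whereas you spell out the substitution $x_{k-1}^+ = G x_{k-1}^- + H y_{k-1}$ and apply the two orthogonality conditions separately.
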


\begin{proof}
Following the given definitions, we simplify $\Exp{{e_k^-}{x_k^-}^T}$ as:
\begin{align}
    \notag&\Exp{{e_k^-}{x_k^-}^T} \\
    \notag&\quad = \Exp{{({x_k^-}-x_k)}{x_k^-}^T} \\
    \notag&\quad = \Exp{{({A{x}_{k-1}^+}-{A{x_{k-1}}} -w_{k-1})}{A{x}_{k-1}^+}^T} \\
    \notag&\quad = \Exp{{({A{e}_{k-1}^+} -w_{k-1})}{(A{x}_{k-1}^+)}^T} \\
    \notag&\quad = \Exp{{({A{e}_{k-1}^+})}{(A{x}_{k-1}^+)}^T} - \Exp{({w_{k-1})}{(A{x}_{k-1}^+)}^T} \\
    &\quad = A{\Exp{{{e}_{k-1}^+}{x_{k-1}^+}^T}} A^T - {\Exp{{w_{k-1}}{{x}_{k-1}^+}^T}}A^T \eqnlabel{Exe}
\end{align}
Using the uncorrelated property of white noise, \eqn{Exe} reduced to
\begin{align}
    \notag&\Exp{{{e}_k^-}{{x}_k^-}^T} \\
    \notag&\quad = A{\Exp{{{e}_{k-1}^+}{x_{k-1}^+}^T}} A^T - {\Exp{{w_{k-1}}}\Exp{{{x}_{k-1}^+}^T}}A^T \\
    &\quad = A{\Exp{{{e}_{k-1}^+}{x_{k-1}^+}^T}} A^T \eqnlabel{Exe1}.
\end{align}
Following the results of proposition \ref{Orthog}, the remaining term in \eqn{Exe1} drops off. Thus,
\begin{align}
    \Exp{{{e}_k^-}{{x}_k^-}^T} = 0 \eqnlabel{Exe2}.
\end{align}
Eq.\eqn{Exe2} establishes the orthogonal relation between prior state estimate and prior error at any time step $k$.
\end{proof}

\begin{theorem}
Given linear sensor model $y := Cx + n$, with measurement noise $n \sim \mathcal{N}(0,R)$, and defining a linear map between posterior and prior state estimates as $x^+ = T(x^-,y):= Gx^- + Hy$, minimization of Wasserstein distance $d(e^+):= W^2_2(p_{e^+}(e^+),\delta(e^+))$ results in the Kalman filter with $G^\ast := I-H^\ast C$ and $H^\ast := \Sigma_e^-C^T(C\Sigma_e^-C^T+R)^{-1}$.
\label{kalman}
\end{theorem}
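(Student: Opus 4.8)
The plan is to reduce the Wasserstein minimization to a standard mean-square problem and then invoke the orthogonality conditions of Proposition~\ref{Orthog}. The starting point is the identity already recorded in the text: for Gaussian $e^+$ we have $W_2^2(\mathcal{N}(\mu_e^+,\Sigma_e^+),\delta(e^+))=\tr{\mu_e^+{\mu_e^+}^T+\Sigma_e^+}=\tr{\Exp{e^+{e^+}^T}}$. Hence the Wasserstein cost $d(e^+)$ coincides exactly with the MMSE cost $J$ of Proposition~\ref{Orthog}, so its minimizers must satisfy the two necessary conditions $\Exp{e^+{x^-}^T}=0$ and $\Exp{e^+ y^T}=0$. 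Since $J$ is a convex (quadratic) function of the entries of $G$ and $H$, these stationarity conditions are also sufficient, and it suffices to show that the claimed $G^\ast,H^\ast$ are the unique pair satisfying them.

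First I would substitute $x=x^--e^-$ and $y=Cx+n$ into $e^+=Gx^-+Hy-x$ to obtain the decomposition $e^+=(G+HC-I)x^-+(I-HC)e^-+Hn$, consistent with the mean and covariance expressions in \eqn{mue}--\eqn{sigmae}. Imposing the first condition $\Exp{e^+{x^-}^T}=0$ and using Corollary~\ref{cor1} (which gives $\Exp{e^-{x^-}^T}=0$) together with the independence of the current measurement noise from the prior ($\Exp{n{x^-}^T}=0$), every term except $(G+HC-I)\Exp{x^-{x^-}^T}$ drops out. Assuming $\Exp{x^-{x^-}^T}\succ 0$, this forces $G+HC-I=0$, i.e. $G^\ast=I-H^\ast C$.

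With $G=I-HC$ substituted back, the error collapses to $e^+=(I-HC)e^-+Hn$. I would then expand the second condition $\Exp{e^+ y^T}=0$ with $y=C(x^--e^-)+n$, again discarding all cross terms that vanish by Corollary~\ref{cor1} and by the uncorrelatedness of $n$ with $x^-$ and $e^-$. What survives is $-(I-HC)\Sigma_e^- C^T+HR=0$, where $\Sigma_e^-:=\Exp{e^-{e^-}^T}$. Rearranging gives $H(C\Sigma_e^- C^T+R)=\Sigma_e^- C^T$, and inverting the positive-definite innovation covariance yields $H^\ast=\Sigma_e^- C^T(C\Sigma_e^- C^T+R)^{-1}$, the classical Kalman gain.

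The main obstacle is not any single calculation but the careful bookkeeping that justifies discarding the cross-correlation terms: the argument leans crucially on Corollary~\ref{cor1} to eliminate $\Exp{e^-{x^-}^T}$ and on the whiteness and independence of $n$ to eliminate the noise cross terms. One must also verify the invertibility assumptions ($\Exp{x^-{x^-}^T}\succ 0$ and $C\Sigma_e^- C^T+R\succ 0$) that render the two conditions uniquely solvable, at which point convexity of $J$ upgrades the stationary point to the global minimizer. Everything else is routine linear algebra.
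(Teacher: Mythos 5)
Your proposal is correct and takes essentially the same route as the paper: you use the identical error decomposition $e^+=(G+HC-I)x^-+(I-HC)e^-+Hn$, the same appeal to Corollary~\ref{cor1} to eliminate $\Exp{e^-{x^-}^T}$, and stationarity conditions that are exactly the paper's \eqn{FOC1}--\eqn{FOC2}, merely repackaged through the orthogonality conditions of Proposition~\ref{Orthog} (which are those first-order conditions restated). The only differences are minor tightenings in your favor: where the paper posits the ansatz $G+HC-I=0$ and checks its consistency with Corollary~\ref{cor1}, you derive it as forced under the assumption $\Exp{x^-{x^-}^T}\succ 0$, and you upgrade the stationary point to a global minimizer via convexity of the quadratic cost rather than the paper's brief Hessian remark.
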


\begin{proof} 
Using \eqn{wass_distG}, the optimization problem can be formally stated as:
\begin{align}
\min_{T} \tr{\mu_e^+{\mu_e^+}^T+\Sigma_e^+}. \eqnlabel{Kopt}
\end{align}
Substituting \eqn{mue} and \eqn{sigmae} in \eqn{Kopt}, we get the cost function $J$ as:
\begin{align}
\footnotesize
J &= \textbf{tr}\Bigg\{\begin{bmatrix}(G+HC-I) & -(HC-I) & H \end{bmatrix}\times \nonumber\\
&\Exp{\begin{pmatrix}x^- \\ e^- \\ n \end{pmatrix}\begin{pmatrix}x^- \\ e^- \\ n \end{pmatrix}^T}\begin{bmatrix}(G+HC-I)^T \\ -(HC-I)^T \\ H^T \end{bmatrix}\Bigg\}
\eqnlabel{cost}
\end{align}
With $\Sigma_e^-=\Var{e^-}$, we expand the cost function and further simplify the expression using properties of measurement white noise as:
\begin{align}
    \notag J &= \textbf{tr}\{(G+HC-I){\Var{x^-}}{(G+HC-I)^T} - \\
    \notag &(G+HC-I){\Exp{x^{-}{e^{-}}^T}}{(HC-I)^T} - \\
    \notag &(HC-I){\Exp{x^{-}{e^{-}}^T}}{(G+HC-I)^T} + \\
    &(HC-I){\Sigma_e^-}{(HC-I)^T} + HR{H}^T\} 
\end{align}
Enforcing the first order optimal conditions $\frac{\partial J}{\partial G}=0; \frac{\partial J}{\partial H}=0$, we get:
\begin{subequations}
\begin{align}
\notag &(G+HC-I){\Var{x^-}}\\
 &- (HC-I){\Exp{x^{-}{e^{-}}^T}} = 0
\eqnlabel{FOC1}\\
\notag &(HC-I){\Sigma_e^-}{C^T} -(G+HC-I){\Exp{x^{-}{e^{-}}^T}}{C^T} \\
& + HR = 0
\eqnlabel{FOC2}
\end{align}

\end{subequations}
\end{proof}
An analytical solution for \eqn{FOC1}, \eqn{FOC2} can be obtained by setting $G+HC-I=0$. Following the assumption, \eqn{FOC1} reduced to ${\Exp{x^{-}{e^{-}}^T}} = 0$. Indeed, this result holds for linear estimator optimal in MMSE sense as shown in corollary \ref{cor1}.  Solving \eqn{FOC2}, we obtain corresponding stationary point,
\begin{subequations}
\begin{align}
H^\ast &= {\Sigma_{e}^-}{C^T}(C{\Sigma_{e}^-}C^T + R)^{-1}
\eqnlabel{H}\\
G^\ast &= I - {\Sigma_{e}^-}{C^T}(C{\Sigma_{e}^-}C^T + R)^{-1}C
\eqnlabel{G}
\end{align}
\end{subequations}
The positive definiteness of the Hessian matrix $\textbf{H} = (C{\Sigma_{e}^-}C^T + R)$ affirms the optimality of the obtained stationary point. Further, posterior error mean $(\mu_e^+)$, covariance $(\Sigma_e^+)$, and measurement update are obtained as:
\begin{subequations}
\begin{align}
\mu_e^+ &= 0, \eqnlabel{mue+}\\
\notag \Sigma_e^+ &= (HC-I){\Sigma_{e}^-}(HC-I)^T + HRH^T,  \\
\notag &= H(C{\Sigma_{e}^-}C^T + R)H^T + {\Sigma_{e}^-} -  \\
\notag &\qquad HC{\Sigma_{e}^-} - {\Sigma_{e}^-}{C^T}{H^T}, \\
&= {\Sigma_{e}^-} - {\Sigma_{e}^-}{C^T}(C{\Sigma_{e}^-}C^T + R)^{-1}C{\Sigma_{e}^-} 
\eqnlabel{sigmae+} \\
x^{+} &= x^{-} + {\Sigma_{e}^-}{C^T}(C{\Sigma_{e}^-}C^T + R)^{-1}(y - Cx^{-}).
\eqnlabel{update}
\end{align}
\end{subequations}
Therefore, for linear measurements and Gaussian uncertainty models, the Wasserstein filter is equivalent to the Kalman filter.

\subsection{Mixture of Gaussian (MoG) Prior}

In the previous section, we considered Gaussian prior estimate with a linear measurement model with Gaussian noises\eqn{meas}. In this section, we consider the case of fusing multimodal Gaussian mixture prior i.e.,  

\begin{align}
    x^-\sim  \sum_i^M\lambda_i^- \mathcal{N}(\mu_i^-,\Sigma_i^-), \eqnlabel{MoG1}
\end{align}

where $0\leq\lambda_i\leq1$ and $$\sum_i^M\lambda_i=1,$$
with a linear measurement model containing Gaussian measurement noise.
 Alternatively, the $M$-order Gaussian mixture prior can be expressed using the generative model approach with a random vector $\boldsymbol{\beta}^-:= [\beta_1^-,..,\beta_M^-], $ indicating on each generating Gaussian, with pdf given by \cite{Bilik2005},
\begin{align}
\rho(\boldsymbol{\beta}^-)&= \sum_{j=1}^M {\lambda_j^-}\delta({\beta_j^-}-1) \\
\notag \text{where,} \\
\beta_j^- &= 
    \begin{cases}
        1, & \text{if } x^-\text{generated by } j^{th} \text{ Gaussian}\\
        0, & \text{otherwise}
    \end{cases} \\
\notag \text{and,} \\
P&(\beta_j^- = 1) := P(\beta_j^-) = \lambda_j^-.
\eqnlabel{indicatorpdf}
\end{align}
We introduce a latent random variable ${z_j^-} := (x^-|\beta_j^-) \sim \mathcal{N}(\mu_j^-, \Sigma_j^-)$ characterizing the $j^{th}$ Gaussian node. Using the generative model approach, the multimodal Gaussian prior can thus be written as
\begin{align}
    x^-\sim\sum_i^M P(\beta_i^-)\rho(z_i^-).
\end{align}
With linear Gaussian measurement, we can assume the posterior distribution to be of the form
\begin{align}
    \notag x^+&\sim  \sum_i^M\lambda_i^+ N(\mu_i^+,\Sigma_i^+)\\
    \notag \text{or,}\\
            x^+&\sim \sum_i^M P(\beta_i^+)\rho(z_i^+)\\
    \notag \text{where,} \\
    \notag z_i^+ &:= (x^+|\beta_i^+)\sim \mathcal{N}(\mu_i^+, \Sigma_i^+).
\end{align}
This interpretation of a mixture of Gaussian(MoG) suggests that there are $M$  latent variables representing each Gaussian. Fusion of a MoG prior with Gaussian likelihood results in posterior MoG with the same number of nodes. Each prior Gaussian node is one-to-one mapped with another posterior node and we assume that this mapping is linear, but the map's parameters are dependent upon the mode and the measurement values. The linear mapping between each corresponding $j^{th}$ Gaussian node of posterior and prior is given by 
\begin{align}
    z_i^+ = G_i z_i^- + H_i y
    \eqnlabel{map2}
\end{align}

\noindent As in \eqn{errordef}, we define the posterior error variable as,
\begin{align}
    e^+ = x^+ - x,
\end{align}
where $x$ is the deterministic true state variable. $e^{+}$ is a shifted version of the distribution of $x^+$ with pdf given by
\begin{align}
    e^+ &\sim \sum_i^M \lambda_i^+ \mathcal{N}(\mu_i^+ - x, \Sigma_i^+) \eqnlabel{posteriorerr}\\
    \notag \text{ or},\\
    e^+ &\sim \sum_i^M P(\beta_i^+)\rho(e_i^+)\\
    \notag \text{where},\\
    e_i^+ &:= z_i^+-x. \eqnlabel{ei}
\end{align}
\begin{prop}(\textbf{$W_2$ between MoG and Dirac reference PDF})
The Wasserstein distance between a multimodal Gaussian Mixture Distribution and the Dirac delta at origin is given by,
\begin{align}
    W_2^2 = \sum_i^M\lambda_i W_i^2.
\end{align}
\label{MoGP}
\end{prop}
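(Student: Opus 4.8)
The plan is to exploit the fact that the Wasserstein distance to a Dirac measure collapses into a single second-moment integral, because the transport plan onto one point is forced. Concretely, when the target is $\delta(e^+)$, the set of admissible couplings $\mathcal{P}_2(p_{e^+},\delta)$ in Definition \ref{Wassdefn} is a singleton: every bit of mass of $p_{e^+}$ must be routed to the origin, so the infimum in \eqn{W-dist} is attained by this unique coupling. I would first record the resulting identity
\begin{align*}
W_2^2(p_{e^+},\delta(e^+)) = \int_{\real^n} \|e^+\|^2 \, dp_{e^+}(e^+) = \Exp{\|e^+\|^2}.
\end{align*}

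Next I would substitute the mixture density \eqn{posteriorerr}, namely $p_{e^+} = \sum_i^M \lambda_i^+ \mathcal{N}(\mu_i^+ - x, \Sigma_i^+)$, into this expectation. Because integration against a convex combination of measures splits into the convex combination of the individual integrals, the second moment factors across the mixture nodes,
\begin{align*}
W_2^2 = \int_{\real^n} \|e^+\|^2 \sum_i^M \lambda_i^+ \, d\mathcal{N}(\mu_i^+ - x, \Sigma_i^+) = \sum_i^M \lambda_i^+ \int_{\real^n} \|e^+\|^2 \, d\mathcal{N}(\mu_i^+ - x, \Sigma_i^+).
\end{align*}

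Finally, I would recognize each inner integral as the per-node Wasserstein distance $W_i^2 := W_2^2(\mathcal{N}(\mu_i^+ - x, \Sigma_i^+), \delta(e^+))$, which by \eqn{W_dirac} equals $\|\mu_i^+ - x\|^2 + \tr{\Sigma_i^+}$. Substituting delivers $W_2^2 = \sum_i^M \lambda_i^+ W_i^2$, the claimed expression.

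The one delicate point, and the step I would argue most carefully, is the first: justifying that the coupling onto a Dirac is unique, so that the infimum degenerates to a plain expectation. This degeneracy is exactly what makes $W_2^2$ linear in the mixture weights here; for a general, non-degenerate target this linearity fails, since optimal transport is not additive over convex combinations. Everything downstream is merely linearity of the integral together with the Gaussian-to-Dirac formula already established in the Corollary.
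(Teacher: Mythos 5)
Your proof is correct, and it takes a genuinely different route from the paper, which supplies no argument of its own: the paper's entire proof of Proposition \ref{MoGP} is a pointer to Theorem 2 of \cite{LEE2015341}. Your central step --- that the coupling set $\mathcal{P}_2(p_{e^+},\delta)$ is a singleton, since any admissible plan must have second marginal $\delta$ and is therefore supported on $\real^n\times\{0\}$, forcing the product coupling --- is valid, and it correctly collapses the infimum in \eqn{W-dist} to the plain second moment $W_2^2(p_{e^+},\delta)=\Exp{\|e^+\|^2}$; everything after that is linearity of the integral over the mixture (well defined, since each component has finite second moment) plus the Gaussian-to-Dirac formula \eqn{W_dirac}. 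What your approach buys: it is self-contained and elementary; it makes transparent that the linearity of $W_2^2$ in the mixture weights is a degeneracy specific to the Dirac target rather than a general property of transport between mixtures; and it actually proves a stronger statement, since Gaussianity is never used until the final identification of each $W_i^2$ --- the identity $W_2^2(p,\delta)=\sum_i \lambda_i\, W_2^2(p_i,\delta)$ holds for any finite mixture with finite second moments. What the citation buys the paper is brevity, at some cost in precision: the mixture-level Wasserstein construction in \cite{LEE2015341} in general only upper-bounds the true $W_2$ between Gaussian mixtures, and it coincides with the exact distance here precisely because the Dirac target also forces the component-level couplings --- which is exactly the caveat your closing paragraph flags. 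Your direct argument would arguably be the better proof to include in the paper.
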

\begin{proof}
\textit{refer Thm. 2 \cite{LEE2015341}}
\end{proof}
\begin{remark}
    Given the Gaussian mixture prior alongside linear Gaussian measurement, the Wasserstein distance between posterior error $e^+$ and Dirac delta distribution centered at the origin is given by,

    \begin{align}
        \notag W_2^2 &= \sum_i^M \lambda_i^+ W_i^2(N(\mu_i^+-x, \Sigma_i^+),\delta(y))\\
        &\notag  = \sum_i^M tr(\lambda_i^+ \{(\mu_i^+ - x)({\mu_i^+ - x})^T+\Sigma_i^+\}) \\
        & = \sum_i^M tr(\lambda_i^+ \Exp{{e_i^ +}{e_i^+}^T}) \eqnlabel{MoGW}
    \end{align}
\end{remark}
Using \eqn{map2}, \eqn{ei} in \eqn{MoGW} we get,
\begin{align}
    \notag W_i^2 &= \textbf{tr}\{\begin{bmatrix}(G_i+H_i C-I) & -(H_i C-I) & H_i \end{bmatrix}. \nonumber\\
&\Exp{\begin{pmatrix}z_i^- \\ e_i^- \\ n \end{pmatrix}\begin{pmatrix}z_i^- \\ e_i^- \\ n \end{pmatrix}^T}\begin{bmatrix}(G_i+H_i C-I)^T \\ -(H_i C-I)^T \\ H_i^T \end{bmatrix}\} 
\eqnlabel{eim}
\end{align}
Wasserstein distance minimization for MoG prior results in the following optimization problem:
\begin{align}
    \min_{\{\lambda_i, G_i, H_i\}_{\forall i \in [1,M]}} \sum_i^M \textbf{tr}\{\lambda_i^+ \Exp{{e_i^ +}{e_i^+}^T}\} 
    \eqnlabel{Kopt2}
\end{align}
subject to
$$\sum_i^M \lambda_i^+ = 1,\lambda_i \geq 0.$$
Although the objective function is non-convex, there exists a convex upper bound 
\begin{align}
\sum_i^M \textbf{tr}\{\lambda_i^+ \Exp{{e_i^ +}{e_i^+}^T}\} \leq \sum_i^M \textbf{tr}\{\Exp{{e_i^ +}{e_i^+}^T}\}, \eqnlabel{ubound}
\end{align}
which can be conveniently optimized. It can be shown that bound minimization leads to Gaussian sum filter (GSF)\cite{Bilik2005}.
\begin{theorem}
Given the prior distribution represented by MoG alongside linear sensor model $y:= Cx + n; n \sim \mathcal{N}(0, R)$, we assume linear maps between corresponding posterior and prior latent variables $z_{i}s$ as $z_i^+ = T_i(z_i^-,y):= G_i z_i^- + H_i y$. The minimization of the upper bound of Wasserstein distance $d(e^+) \leq \sum_i^M W^2_2(p_{e_i^+}(e_i^+),\delta(e_i^+))$ given in \eqn{ubound} results in the Gaussian sum filter (GSF) with $G_i^\ast:= I-H_i^\ast C$ and $H_i^\ast:= \Sigma_{i}^-C^T(C\Sigma_{i}^-C^T+R)^{-1}$.
\label{GMM}
\end{theorem}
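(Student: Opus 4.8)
The plan is to exploit the fact that passing from the true objective \eqn{Kopt2} to its upper bound \eqn{ubound} strips away the weights $\lambda_i^+$ and thereby decouples the problem. First I would observe that the upper bound $\sum_i^M \tr{\Exp{{e_i^+}{e_i^+}^T}}$ is a sum of $M$ terms in which the $i$-th summand, written explicitly in \eqn{eim}, depends only on the single pair $(G_i, H_i)$ and on no $(G_j, H_j)$ with $j \neq i$. Consequently the joint minimization over $\{G_i, H_i\}_{i=1}^M$ separates into $M$ independent optimization subproblems, one per Gaussian node, and the weights no longer enter the cost.

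Next I would argue that each node's subproblem is formally identical to the Kalman optimization solved in Theorem \ref{kalman}. Comparing the $i$-th summand \eqn{eim} with the Kalman cost \eqn{cost}, the two coincide under the substitution $x^- \mapsto z_i^-$, $e^- \mapsto e_i^-$, $G \mapsto G_i$, $H \mapsto H_i$, where $\Exp{{e_i^-}{e_i^-}^T} = \Sigma_i^-$ because the true state $x$ is deterministic. Hence I may replay the Kalman derivation node by node: expanding the trace and using the whiteness of the measurement noise $n$, I would enforce the first-order conditions $\partial / \partial G_i = 0$ and $\partial / \partial H_i = 0$ to obtain the node-wise analogues of \eqn{FOC1} and \eqn{FOC2}.

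To solve these I would set $G_i + H_i C - I = 0$, which collapses the first condition to $\Exp{z_i^- {e_i^-}^T} = 0$. This node-level orthogonality is the mixture analogue of Corollary \ref{cor1} and follows from the same recursive white-noise argument applied to each latent variable $z_i$. The surviving condition then yields $H_i^\ast = \Sigma_i^- C^T (C \Sigma_i^- C^T + R)^{-1}$ and, back-substituting, $G_i^\ast = I - H_i^\ast C$, which are precisely the GSF gains. Finally, the per-node Hessian $(C\Sigma_i^- C^T + R) \succ 0$ certifies that each stationary point is a minimizer, so the decoupled solution minimizes the full upper bound.

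The main obstacle is not computational, since each subproblem merely re-runs Theorem \ref{kalman}, but conceptual: I must make precise that it is exactly the relaxation to the upper bound \eqn{ubound}, i.e.\ the removal of the convexity-breaking weights $\lambda_i^+$ from inside the trace, that renders the nodes independent and recovers GSF. This same observation is what later exposes GSF as suboptimal for the true weighted objective \eqn{Kopt2}. A secondary point to justify carefully is the node-level orthogonality $\Exp{z_i^- {e_i^-}^T} = 0$, which must be shown to hold for each mode that is MMSE-optimal at the node level rather than merely for the aggregate prior.
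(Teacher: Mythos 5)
Your proposal is correct and takes essentially the same route as the paper: the relaxation \eqn{ubound} decouples the cost into $M$ independent node-wise problems, each the $i^{th}$ copy of the Kalman cost (the paper's $J_i$ in \eqn{costi}), solved by replaying Theorem \ref{kalman} with the per-node orthogonality condition \eqn{Exe3} to obtain $H_i^\ast = \Sigma_i^- C^T(C\Sigma_i^- C^T + R)^{-1}$ and $G_i^\ast = I - H_i^\ast C$. The only piece you omit is that, since the relaxed bound leaves the weights undetermined, the paper completes the GSF by supplying the mixing coefficients $\lambda_i^+$ via the separate Bayesian update \eqn{lamu}.
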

\begin{proof}
    Relaxed minimization problem can be derived from \eqn{Kopt2} as,
    \begin{align}
        \sum_i^M \textbf{tr}\{\lambda_i^+ \Exp{{e_i^ +}{e_i^+}^T}\} \leq \sum_i^M \textbf{tr}\{\Exp{{e_i^ +}{e_i^+}^T}\}.
        \eqnlabel{rlxmin}
    \end{align}
    Since the cost function is a sum of squares with decoupled associated parameters, each term can be independently optimized. Following the expansion of the $i^{th}$ term using \eqn{eim}, we have
    \begin{align}
    \notag J_i &= \textbf{tr}\{(G_i+H_iC-I){\Var{x_i^-}}{(G_i+H_iC-I)^T} \\
    \notag &-(G_i+H_iC-I){\Exp{x_i^{-}{e_i^{-}}^T}}{(H_iC-I)^T} - \\
    \notag &(H_iC-I){\Exp{x_i^{-}{e_i^{-}}^T}}{(G_i+H_iC-I)^T} + \\
    &(H_iC-I){\Sigma_{i}^-}{(H_iC-I)^T} + H_iR{H_i}^T\}.
    \eqnlabel{costi}
    \end{align}
    Considering the linear map between Gaussian marginalized prior and posterior with minimum mean squared marginalized error as the objective function, the orthogonality conditions mentioned in proposition \ref{Orthog} holds well, i.e.
    \begin{align}
        \Exp{{{e_i}_k^-}{{x_i}_k^-}^T} = 0 \eqnlabel{Exe3}.
    \end{align}
    The objective function minimization follows the same approach detailed in theorem \ref{kalman}, thus leading to the following stationary point:
    \begin{subequations}
        \begin{align}
            H_i^\ast &= {\Sigma_{i}^-}{C^T}(C{\Sigma_{i}^-}C^T + R)^{-1}
            \eqnlabel{Hi}\\
            G_i^\ast &= I - {\Sigma_{i}^-}{C^T}(C{\Sigma_{i}^-}C^T + R)^{-1}C
            \eqnlabel{Gi}
        \end{align}
    \end{subequations}
    A reasonable estimate of mixing coefficients $\lambda_is$ can be obtained using the Bayesian approach, \cite{AndersonBD}
    \begin{align}
        \lambda_i^+ = \frac{{\lambda_i^-}\mathcal{N}(y;C\mu_i^-, C^T\Sigma_i^-C+R)}{\sum_j^M\lambda_j^-\mathcal{N}(y;C\mu_j^-, C^T\Sigma_j^-C+R)}
        \eqnlabel{lamu}
    \end{align}    
\end{proof}
\begin{algorithm}
\caption{GSF Measurement Update}\label{GSF}
\begin{algorithmic}
\State $\textbf{Given:}\ \mu_i^-,\Sigma_i^-\ \forall i=1,..M$
\State
\State $\textbf{Posterior Parameters:}\ \{H_{i},\lambda_{i}^+\}$
\State $\  H_{i} = {\Sigma_{i}^-}{C^T}(C{\Sigma_{i}^-}C^T + R)^{-1} $
\State $\  \lambda_{i}^+ = \frac{{\lambda_i^-}\mathcal{N}(y;C\mu_i^-, C^T\Sigma_i^-C+R)}{\sum_j^M\lambda_j^-\mathcal{N}(y;C\mu_j^-, C^T\Sigma_j^-C+R)}$
\State
\State $\textbf{Posterior Update:}$
\State $\ \mu_i^{+} = \mu_i^{-} + H_i(y - C\mu_i^{-})$
\State $\ {\Sigma_{i}^+} = {\Sigma_{i}^-} - H_iC{\Sigma_{i}^-}$
\State
\State $\textbf{Posterior GMM:}\ x^+\sim\sum_i^M\lambda_i^+ N(\mu_i^+,\Sigma_i^+)$
\end{algorithmic}
\end{algorithm}
It should be noted that obtained $H_{i}^\ast,\ \lambda_i^+ $ for GSF are optimal only for the relaxed minimization problem in \eqn{rlxmin} and sub-optimal for exact optimization problem in \eqn{Kopt2}. Considering the sub-optimal nature of the solution, an intuitive idea is to perform a local search in the vicinity of $(H_{i}^\ast,\ \lambda_i^+),$ with the goal of improvement over the GSF. The cost function in \eqn{costi} can be simplified using \eqn{Exe3} and trivial first order optimal condition $G_i+H_iC-I=0$,
    \begin{align}
        \notag J_i &=  \\
        &\qquad \textbf{tr}\{(H_iC-I){\Sigma_{i}^-}{(H_iC-I)^T} + H_iR{H_i}^T\}.
        \eqnlabel{costi2}
    \end{align}
Using \eqn{costi2}, we formulate the nonlinear Gaussian sum filter (nGSF) which solves the following nonlinear constrained optimization with sub-optimal solutions in \eqn{Hi}, \eqn{Gi}, and \eqn{lamu} as initial guess,
\begin{align}
    \notag \hat{J} &= min_{\{\lambda_i^+,H_i\}}\\
    &\sum_i^M\lambda_i^+\textbf{tr}\{(H_iC-I){\Sigma_{i}^-}{(H_iC-I)^T} + H_iR{H_i}^T\},
    \eqnlabel{costmod}
\end{align}
subject to
$$\sum_i^M\lambda_i^+ = 1; \lambda_i^+ \geq 0.$$

\begin{algorithm}
\caption{nGSF Measurement Update}\label{nGSF}
\begin{algorithmic}
\State $\textbf{Given:}\ \mu_i^-,\Sigma_i^-\ \forall i=1,..M$
\State
\State $\textbf{Initial Guess:}\ \{H_{i0},\lambda_{i0}^+\}$
\State $\  H_{i0} = {\Sigma_{i}^-}{C^T}(C{\Sigma_{i}^-}C^T + R)^{-1} $
\State $\  \lambda_{i0}^+ = \frac{{\lambda_i^-}\mathcal{N}(y;C\mu_i^-, C^T\Sigma_i^-C+R)}{\sum_j^M\lambda_j^-\mathcal{N}(y;C\mu_j^-, C^T\Sigma_j^-C+R)}$
\State
\State $\textbf{Nonlinear Optimization:}$
\State $\ \textit{Cost:}\ \min_{\{\lambda_i^+, H_i\}_{\forall i \in [1,M]}} \sum_i^M \textbf{tr}\{\lambda_i^+ \Exp{{e_i^ +}{e_i^+}^T}\}$
\State $\ \textit{Constraints:}\ \sum_i^M\lambda_i^+ = 1; \lambda_i^+ \geq 0$
\State
\State $\textbf{Posterior Update:}$
\State $\ \mu_i^{+} = \mu_i^{-} + H_i^{\ast}(y - C\mu_i^{-})$
\State $\ {\Sigma_{i}^+} = {\Sigma_{i}^-} - H_i^{\ast}C{\Sigma_{i}^-}$
\State
\State $\textbf{Posterior GMM:}\ x^+\sim\sum_i^M\lambda_i^+ N(\mu_i^+,\Sigma_i^+)$
\end{algorithmic}
\end{algorithm}

\section{EXAMPLE}
We consider the classical duffing oscillator as an example to compare the performance of the nonlinear Gaussian sum filter (nGSF) with baseline GSF. The nonlinear dynamics in terms of two states $x = [x_1,x_2]'$ is given by:
\begin{align}
    \dot{x}_1=x_2,\ \dot{x}_2=-x_1-0.25x_2-x_1^3.
\end{align}

We propagate Gaussian distributed uncertainty in the initial condition, i.e. $\Delta_{x_0}\sim\mathcal{N}(0, I)$, using particle ensemble. Next, we obtain the Gaussian mixture prior by fitting the MoG over the propagated point cloud. It is evident from Fig.\ref{fig:result0} that despite the Gaussian uncertainty in the initial condition, the propagated distribution is non-Gaussian. In the simulated cases, the prior state distribution at each time step is approximated by a Gaussian mixture with 10 components.  

Considering a linear scalar measurement  corrupted by Gaussian noise, i.e.
\begin{align}
    y = x + v_k,\ v_k \sim \mathcal{N}(0,0.1),
\end{align}
we obtain the posterior MoG using the GSF and nGSF algorithms. Random sampling from the obtained posterior gives us a prior point cloud for the next time step. 
The estimation performance for baseline GSF and nGSF is presented in Fig.\ref{fig:result1a} and Fig. \ref{fig:result1b}.
\begin{figure}[h]
\includegraphics[width=7.5cm]{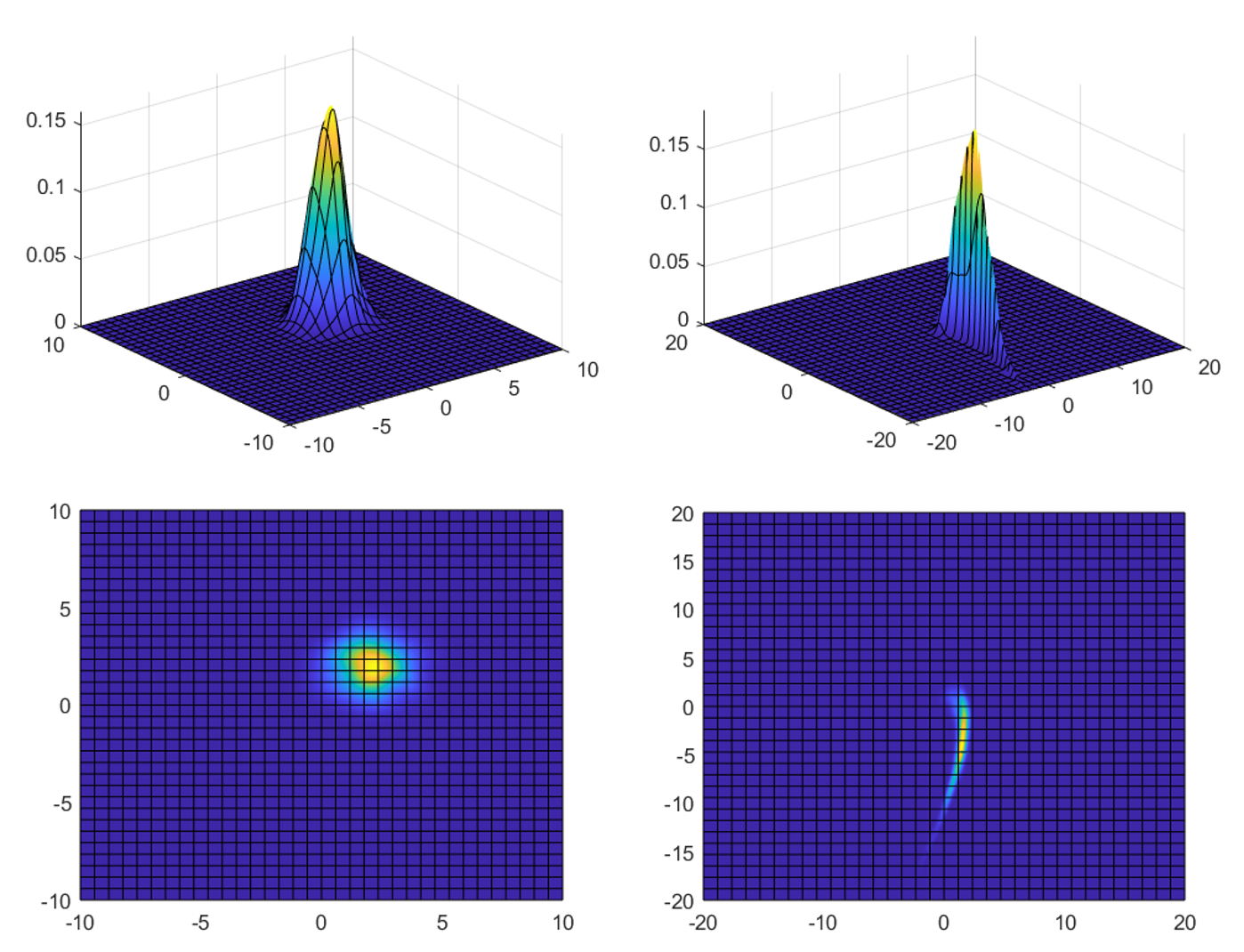}
\caption{Initial and propagated state uncertainty after 1 time step (0.5 seconds).}
\label{fig:result0}
\end{figure}
\begin{figure}[h]
\includegraphics[width=7.5cm]{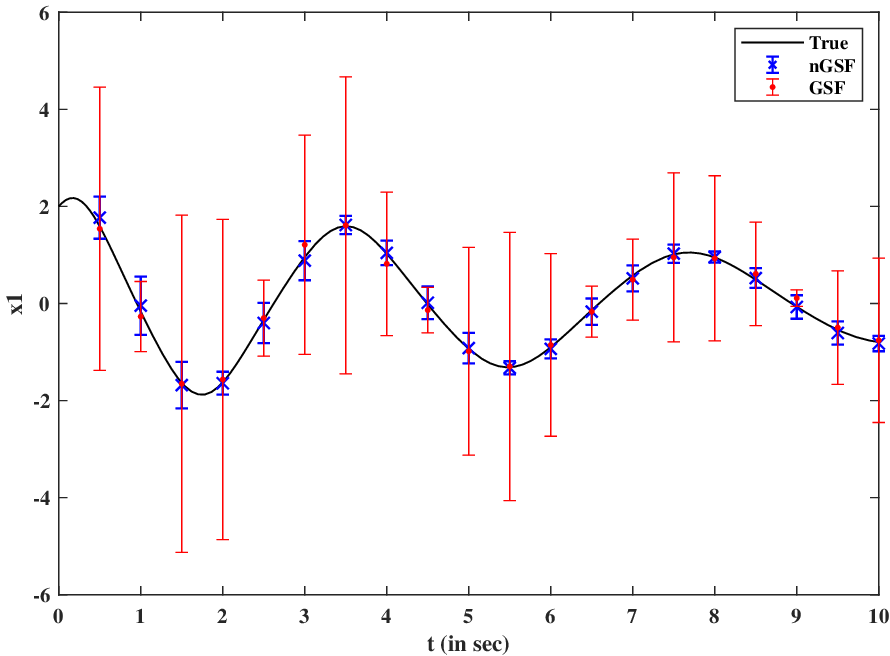}
\caption{Position ($x_1$) estimates with filter sample time: 0.5s.}
\label{fig:result1a}
\end{figure}
\begin{figure}[h]
\includegraphics[width=7.5cm]{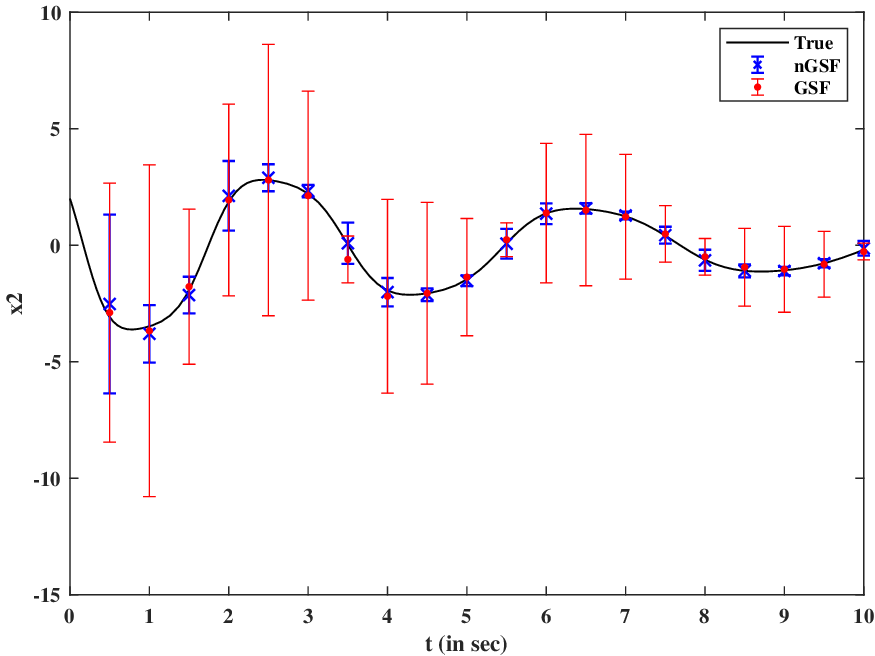}
\caption{Velocity ($x_2$) estimates with filter sample time: 0.5s.}
\label{fig:result1b}
\end{figure}
We observe the performance enhancement in terms of reduced error variance in the case of nGSF compared to GSF. 
\section{CONCLUSIONS}
A novel distribution agnostic estimation framework based on Wasserstein distance is formulated in this work. It is demonstrated that for linear Gaussian uncertainty, the optimal filter in the Wasserstein metric is essentially the Kalman filter. Moreover, the Gaussian Sum Filter is derived in the Wasserstein framework with explicit proof of sub-optimality in the MMSE sense. Notably, other approaches do not provide such explicit guarantees. Furthermore, the proposed nGSF is shown to enhance the accuracy of the Gaussian sum estimator using a nonlinear optimization technique, at the cost of additional computation. In our future work, we will relax the linear Gaussian measurement restriction and incorporate nonlinear measurement models with non Gaussian noise.

\section{ACKNOWLEDGMENTS}
This work is supported by AFOSR grant FA9550-22-1-0539 with Dr. Erik Blasch as the program director.


\bibliographystyle{unsrt}
\bibliography{references}





\end{document}